\newcolumntype{C}{>{\centering\arraybackslash}X} % centered version of "X" type
\newtheorem{thm}{Theorem}
\newtheorem{proof}{proof}
\newtheorem{appro}{Approximation}
\begin{document}

\title{  Analysis and Optimization of Random Cache in Multi-Antenna HetNets with Interference Nulling
}
%Throughput Analysis and Optimization of Random cache in Heterogeneous Networks with Interference nulling
%Random Caching  Based  Coordinated Beamforming in Multi-Antenna Heterogeneous Networks
%Throughput Analysis and Optimization of Random Caching in Multi-Antenna Heterogeneous Networks
%\author{Author 1,~
%        Author 2 ,~Author 3, and~Author 4\vspace{-15pt}
%        }

\author{\IEEEauthorblockN{Kangda Zhi\textsuperscript{*},  Guangji Chen\textsuperscript{*}, Ling~Qiu\textsuperscript{*},  Xiaowen Liang\textsuperscript{*}, and Chenhao Ren\textsuperscript{$\dagger$}}
       \thanks{This work was supported by National Natural Science Foundation of China under Grant No.61672484. (Corresponding author: Ling Qiu.)}
\IEEEauthorblockA{\textsuperscript{*}School of Information Science and Technology, University of Science and Technology of China, Hefei, China.\\
\textsuperscript{$\dagger$} Department of Electrical and Computer Engineering, University of California, Davis, USA
\\ Email: (kdzhi, chen0404)@mail.ustc.edu.cn, (lqiu, lxw)@ustc.edu.cn, cheren@ucdavis.edu}}

%\author{\IEEEauthorblockN{Kangda Zhi\textsuperscript{*},  Guangji Chen\textsuperscript{*}, Ling~Qiu\textsuperscript{*},  Xiaowen Liang\textsuperscript{*}, and ChenHao Ren\textsuperscript{$\dagger$}}\\% <-this % stops a space
%       \thanks{This work was supported by National Natural Science Foundation of China under Grant No.61672484. (Corresponding author: Ling Qiu.)}
%\IEEEauthorblockA{Key Laboratory of Wireless-Optical Communications, Chinese Academy of Sciences,\\ School of Information Science and Technology, University of Science and Technology of China, Hefei, P. R. China.\\ Email: (kdzhi, chen0404)@mail.ustc.edu.cn, (lqiu, lxw)@ustc.edu.cn, cheren@ucdavis.edu}}

%Joint Caching and Coordinated Beamforming in Multi-antenna Heterogeneous Networks

%\markboth{Journal of \LaTeX\ Class Files,~Vol.~13, No.~9, September~2014}%
%{Shell \MakeLowercase{\textit{et al.}}: Bare Demo of IEEEtran.cls for Journals}

\maketitle

\begin{abstract}
From the perspective of statistical performance, this paper presents a framework for the per-user throughput analysis in random cache based multi-antenna heterogeneous networks (HetNets) with user-centric inter-cell interference nulling (IN). Using tools from stochastic geometry, an explicit expression for the per-user throughput is derived. Based on the analytical results, the optimal cache probabilities for maximizing the per-user throughput are analyzed. Theoretical analysis and numerical results reveal that the optimal random cache under interference nulling fully harvests the file diversity gain (FDG) and achieves a promising per-user throughput.

\end{abstract}

%we study the analysis and optimization of random caching in multi-antenna heterogeneous networks with user-centric inter-cell interference nulling.
\begin{IEEEkeywords}
Cache, inter-cell interference nulling, multi-antenna, heterogeneous wireless networks, stochastic geometry.
\end{IEEEkeywords}

%Caching have attracted significant attention since it can effectively reduce backhaul transmission cost for many techniques proposed in 5G wireless communication system.
%cache is a promising approach to reduce backhaul capacity requirements and ensuring high-speed and low-latency content delivery for 5G wireless systems.
%

\IEEEpeerreviewmaketitle

%\vspace{-15pt}
\section{Introduction}
%缓存背景，缓存类型

Heterogeneous networks (HetNets) and multi-antenna have been recognized as two key technologies to meet the predicted throughput requirements in 5G  networks\cite{andrews2014will}. However, rapidly growing traffic makes the backhaul rate a major bottleneck in implementing these technologies in practical systems. To address these problems, caching has been recognized as a promising technology to reduce the backhaul traffic and achieve low latency by avoiding duplicate delivery of popular content\cite{li2018survey}.

Recently, contrasted to traditional grid model, a random network model based on Poisson point processes (PPPs) has been widely applied in  cache-enabled small cell networks to characterize the irregularity and randomness of base station (BS) locations\cite{bacstug2015cache,wu2018content,tamoor2016caching,7417458}. These works proposed and analyzed some traditional cache placement schemes, such as the most popular cache (MPC) scheme \cite{bacstug2015cache}\cite{wu2018content},  the uniform cache (UC) scheme\cite{tamoor2016caching} and the  i.i.d. cache (IIDC) scheme\cite{7417458}. However, these  cache schemes \cite{bacstug2015cache,wu2018content,tamoor2016caching,7417458} cannot sufficiently exploit the finite storage capacity, and may not yield optimal network performance. In contrast, random cache (RC) scheme has been proved to achieve better performance by leveraging file diversity and file popularity. Recent contributions have considered the analysis and optimization of various performance metrics  in random cache based small cell networks, e.g.  the throughput\cite{chen2017probabilistic}, the hit probability\cite{wen2017cache}, and the successful transmission probability (STP)\cite{cui2016analysis}. {Note that \cite{chen2017probabilistic,wen2017cache,cui2016analysis} focus on the scenarios without interference management.}

Interference management is critical in random cache based networks. This is because when the nearest  BS does not cache the requested content, the user will be served by a relatively farther BS, which makes the signal usually weaker than the interference. To increase received signal power under random cache, \cite{wen2018random}\cite{chae2017content} jointly considered random cache and cooperative transmission to optimize the STP in HetNets. Nevertheless, \cite{wen2018random}\cite{chae2017content} are studied without considering the multi-antenna. To take advantages of multiple antennas, \cite{kuang2017random} adopted maximal ratio transmission (MRT) beamforming to boost the desired signal in  multi-antenna cache-enabled networks with limited backhaul. Moreover, part of the degree of freedom (DoF) can be utilized to avoid those dominant interference in cache-enabled networks. Adopting coordinated beamforming (CBF), \cite{xu2017analysis} considered the analysis and optimization of STP in multi-antenna cache-enabled networks, where a certain number of small base stations (SBSs) form the coordination cluster. However, none of the aforementioned literatures  has addressed the analysis and optimization of  per-user throughput, which is an important performance metric in cache-enabled networks\cite{li2018survey}.

%Interference management is critical in harvesting the benefit brought by random cache

%However, the problem of maximizing per-user throughput, which is an important performance metric in cache-enabled networks\cite{li2018survey}, is not yet addressed in the literature.

In addition, the CBF scheme adopted by \cite{xu2017analysis} ignored the ``BS selection conflict problem'' in dynamic clustering, i.e., a joint BS shared by different clusters can be selected by different users who share the same spectrum at the same time.  Although this problem can be solved by allocating  orthogonal time-frequency resources to adjoint clusters\cite{park2016cooperative},  it will result in a reduction in the per-user available bandwidth. Therefore, the scheme in \cite{xu2017analysis} may not be suitable for throughput maximization. Recently, \cite{li2015user,cui2016user} proposed a tractable interference nulling (IN) strategy in traditional networks with connection-based association scheme. \cite{li2015user,cui2016user} used random elimination strategy to treat the BS selection conflict problem, which can avoid the reduction of throughput. However, these IN strategies  cannot be directly applied to cache-enabled HetNets with content-centric association scheme ,  as different association schemes result in different distributions of the locations of serving and interfering BSs, and hence lead to different geographic locations of  the dominant interference. Therefore,  to  fully utilize the storage capacity and achieve promising per-user throughput, further research is required to apply IN to random cache based HetNets and understand the relationship between random cache and IN.

%In addition, in\cite{xu2017analysis}, a joint BS shared by different clusters can be selected by different users who share the same spectrum at the same time. In this case, orthogonal time-frequency resources should be allocated to these cluster as discussed in \cite{park2018feedback}, which results in a reduction in the per-user available bandwidth. Therefore, the scheme in \cite{xu2017analysis} may not be suitable for throughput maximization. Recently, a novel interference nulling (IN) strategy, which avoids the allocation of bandwidth, has been proposed and applied to certain scenarios\cite{li2015user,cui2016user}. { However, this strategy which investigated in traditional networks cannot be directly applied to random cache based HetNets,  as the content-centric association scheme results in different distributions of the locations of serving and interfering BSs, and hence leads to different geographic locations of  the dominant interference. Therefore,  to  fully utilize the storage capacity and achieve promising per-user throughput, further research is required to understand the relationship between optimal cache placement and IN.     }

In this paper, we apply the user-centric inter-cell IN strategy to random cache based HetNets and jointly investigate the benefits of random cache and IN. Employing IN, each SBS utilizes zero-forcing beamforming (ZFBF) to avoid the interference to nearby SBS scheduled users. Since the cache-enabled networks aim at efficient content delivery\cite{wen2017cache}, we analyze and optimize the per-user throughput. Our main contributions are summarized as follows:

\begin{itemize}
\item We present a framework for the per-user throughput analysis in random cache based multi-antenna HetNets with user-centric inter-cell IN.
\item {By carefully handling the different types of interfering SBSs and employing appropriate approximations}, we theoretically derive the explicit expressions for the per-user throughput  based on stochastic geometry, which can be easily calculated  through mathematical software.
\item The optimal cache placement is obtained via standard  optimization techniques. The optimization results indicate that more different contents can be cached in the SBS tier when SBSs are equipped with more antennas.
\item By numerical results, we analyze the influences of various system parameters, including physical-layer related parameters and content-layer related parameters. Moreover, we show that the benefits brought by random cache are enhanced by the IN. The results show that joint random cache and IN can achieve a significant gain in the per-user throughput over existing baseline schemes.
\end{itemize}

\section{System Model}
%\网络拓扑和参数
We consider a downlink two-tier HetNets where macro base stations (MBSs) are overlaid with small base stations (SBSs). {The locations of} MBSs and SBSs are modeled as two independent PPPs ${\Phi _0}$ and ${\Phi _1}$ with densities ${\lambda _0}$ and ${\lambda _1}$, respectively. The MBSs and SBSs are equipped with $N_0$ and $N_1$ antennas, respectively. The single antenna users are located according to an independent PPP ${\Phi _u}$ with ${\lambda _u}$, where ${\lambda _u} \gg {\lambda _{\rm{1}}} > {\lambda _{\rm{0}}}$. We assume that all BSs are fully loaded and active as in \cite{cui2016analysis,kuang2017random,wen2018random}. Without loss of generality, we focus on the typical user ${u _0}$ located at the origin.

Let ${\cal F} \buildrel \Delta \over = \left\{ {1,2,...,F} \right\}$ denote the set of $F$ files in the network. {For ease of illustration,  all files are assumed to have the equal unit size as \cite{chen2017probabilistic,wen2017cache,chae2017content,kuang2017random,xu2017analysis}.} We assume that the popularity $a_n$ of the file $n$ is arranged in a descending order according to the law of Zipf {as \cite{chen2017probabilistic,wen2017cache,cui2016analysis,wen2018random,chae2017content,kuang2017random,xu2017analysis}, i.e., ${a_n} = \frac{{{n^{ - \gamma }}}}{{\sum\nolimits_{i \in {\cal F}} {{i^{ - \gamma }}} }}$,} where the Zipf exponent $\gamma$ represents the skewness of the popularity distribution. We assume that each MBS is equipped with no cache but is connected to the core network through backhaul link with high capacity. Each SBS is equipped with a cache of size $C$, where $C \le F$. A random cache strategy is adopted where each SBS independently caches file $n\in {\cal F}$ with probability $T_n \in [0,1]$, and we have $\sum\nolimits_{n \in {\cal F}} {{T_n}}  \le C$ due to the cache storage limit. Hereafter, $ {\mathbf{T}} \buildrel \Delta \over = {\left( {{T_n}} \right)_{n \in {\cal F}}}$ is referred to as ``cache distribution". {Denote ${{\cal F}^+} \buildrel \Delta \over = \left\{ {n|n \in {\cal F},{T_n} > 0} \right\}$, hence $\left| {{{\cal F}^+}} \right|$ represents the number of different files could be stored in the SBS tier.  }

%接入策略，
The user who requests file $n$ is associated with the nearest SBS cached file $n$ within $R_c$, where $R_c$ is the distance threshold used to define the maximum service distance for the SBSs. Specifically, if no SBS within $R_c$  stores the requested file, the nearest MBS will retrieve file from the core network and then fulfill the user request. {After the user association, each BS schedules its associated users according to TDMA, i.e., scheduling one user in each time slot. Therefore, there is no intra-cell interference.}

%干扰消除策略
Orthogonal frequencies are applied at MBSs and SBSs to avoid inter-tier interference. The available bandwidths of each MBS and SBS for $u_0$ are $W_m$ and $W_s$, respectively. Similar to \cite{wen2018random}, we assume ${W _s}>{W _m}$. To suppress those dominant interference within the SBS tier, we consider a user-centric inter-cell interference nulling strategy, assuming each SBS uses ZFBF  to avoid interference with neighboring users.  {Considering that in the cache-enabled networks, dominant interferences come from the SBSs that are closer than the serving SBS, thus the user associated with SBS tier will send an interference mitigation request to all the interfering SBSs within distance $R_c$.}    Each SBS can handle at most $N_1-1$ requests due to spatial DoF constraint. Let $K_x$ represent the number of requests received by SBS $x$, if ${K_x} \ge {N_1}$, we assume that SBS $x$ will randomly choose $N_1-1$ users to suppress interference. Therefore, SBS $x$ will use ${\max \left( {{N_1} - {K_{{x}}},1} \right)}$ DoF to boost the desired signal to its scheduled user. Note that if $K_x=0$, SBS $x$ will utilize maximal ratio transmission (MRT) precoding to serve its scheduled user.

%SIR， 信道参数
In this paper, we consider the interference-limited regime and hence ignore the thermal noise. For notation simplicity, we use number 0 and 1 to distinguish the MBS tier and the SBS tier. We denote the serving BS of $u_0$ associated with tier $z$ as BS $x_{z,0}$, where $z \in \left\{ {0,1} \right\}$.  When the typical user $u_0$ is associated with the SBS $x_{1,0}$, different from the conventional connection-based networks, there are three types of interferers. Let $\Phi _1^{1}$ denote the set of interfering SBSs closer than serving SBS $x_{1,0}$, $\Phi _1^{2}$ denote the set of interfering SBSs farther than $x_{1,0}$ but closer than $R_c$, and $\Phi _1^{3}$ denote the set of interfering SBSs farther than $R_c$. Note that $\Phi _1^{1}$ and $\Phi _1^{2}$ are the set of SBSs who receive the IN request from $u_0$ but cannot mitigate interference for $u_0$ due to the limitation of DoF. Then the received signal of $u_0$ associated with tier $1$ is given by:
\begin{align}
{y_{1,0}} = &{\left| {{x_{1,0}}} \right|^{ - \frac{{{\alpha _1}}}{2}}}{\rm \bf{h}}_{{x_{1,0}}}^ * {{\rm \bf{w}}_{{x_{1,0}}}}{{{s}}_{{x_{1,0}}}} \\\nonumber
&+ \sum\limits_{{x_{1,j}} \in \Phi _1^1 \cup \Phi _1^2 \cup \Phi _1^3} {{{\left| {{x_{1,j}}} \right|}^{ - \frac{{{\alpha _1}}}{2}}}{\rm \bf{h}}_{{x_{1,j}}}^ * {{\rm \bf{w}}_{{x_{1,j}}}}{{{s}}_{{x_{1,j}}}}},
\end{align}
where ${{\rm \bf{h}}_{{x_{}}}} \in {\cal C}{\cal N}\left( {{{\bf 0}_{{N_1} \times 1}},{{\rm \bf I}_{{N_1}}}} \right)$ denotes the small-scale fading between SBS $x$ and $u_0$, $|x|$ denotes the distance from  BS $x$ to $u_0$, ${\left| {{x_{}}} \right|^{ - \frac{{{\alpha _1}}}{2}}}$ represents the large-scale path loss, where $\alpha_1>2$ is the path loss exponent, $s_x$ is the information symbol from SBS $x$ to $u_0$, and ${\rm \bf w}_x$ denotes the beamforming vector for SBS $x$.

 We assume that serving SBS $x_{1,0}$ receives $K_{x_{1,0}}$ IN requests, and  SBS $x_{1,0}$ will handle ${\min \left( { {K_{{x_{1,0}}}},{N_1} -1} \right)}$ IN requests.  Then the ZF beamforming vector ${\rm \bf w}_{x_{1,0}}$ is given by:
\begin{align}
{{\rm \bf {w}}_{{x_{1,0}}}} = \frac{{\left( {{{\rm \bf I}_{{N_1}}} - {\rm \bf H}{{\left( {{{\rm \bf H}^ * }{\rm \bf H}} \right)}^{ - 1}}{{\rm \bf H}^ * }} \right){{\rm \bf {h}}_{{x_{1,0}}}}}}{{\left\| {\left( {{{\rm \bf I}_{{N_1}}} - {\rm \bf H}{{\left( {{{\rm \bf H}^ * }{\rm \bf H}} \right)}^{ - 1}}{{\rm \bf H}^ * }} \right){{\rm \bf {h}}_{{x_{1,0}}}}} \right\|}},
\end{align}
where  ${\rm \bf H}{\rm{ = }}\left[ {{{\rm \bf {h}}_1}, \ldots ,{{\rm \bf {h}}_{\min \left( {{K_{{x_{1,0}}}},{N_1} - 1} \right)}}} \right]$ is the channels between SBS  $x_{1,0}$ and ${\min \left( { {K_{{x_{1,0}}}},{N_1} -1} \right)}$ selected users, and $\rm \bf H^*$ is the conjugate transpose of $\rm \bf H$.

However, if no SBS within $R_c$  stores the requested file, $u_0$ will be associated with tier $0$ and served by the nearest MBS $x_{0,0}$ that utilizes MRT precoding.  In this case, the form of received signal is similar to that in \cite{kuang2017random}. Therefore, the signal-to-interference ratio (SIR) of $u_0$ associated with tier $z$ is ${{\rm SIR}_z} = \frac{{{S_z}}}{{{I_z}}}$, $z \in \left\{ {0,1} \right\}$, i.e.,
\begin{align}\label{newEq1}
&{{\rm SIR}_{\rm{0}}} = \frac{{{g_{{x_{{\rm{0}},0}}}}{{\left| {{x_{{\rm{0}},0}}} \right|}^{ - {\alpha _{\rm{0}}}}}}}{{{{\sum\nolimits_{{x_{0,j}} \in {\Phi _0}\backslash {x_{0,0}}} {{g_{{x_{0,j}}}}{{\left| {{x_{0,j}}} \right|}^{ - {\alpha _0}}}} }_{}}}},\\\label{111}
&{{\rm SIR}_{\rm{1}}} = \frac{{{g_{{x_{{\rm{1}},0}}}}{{\left| {{x_{{\rm{1}},0}}} \right|}^{ - {\alpha _{\rm{1}}}}}}}{{\sum\nolimits_{{x_{1,j}} \in \Phi _1^1 \cup \Phi _1^2 \cup \Phi _1^3} {{g_{{x_{1,j}}}}{{\left| {{x_{1,j}}} \right|}^{ - {\alpha _1}}}} }}{\rm{.}}
\end{align}
Here, ${{g_{{x_{z,j}}}}},z\in\left\{0,1\right\}$ is the equivalent channel gain from BS $x_{z,j}$ to user $u_0$, including channel coefficients and the beamforming. $\alpha_z$ is the path loss exponent.  According to \cite{li2015user}\cite{cui2016user}, the information signal channel gain ${{g_{{x_{z,0}}}}}$ follows Gamma distributed, i.e., ${g_{{x_{0,0}}}} \sim {\rm{Gamma}}\left( {{N_0},1} \right)$ and ${g_{{x_{1,0}}}} \!\sim\! {\rm Gamma}\left( {\max \left( {{N_1} - {K_{{x_{1,0}}}},1} \right),1} \right)$, and the interfering channel gain ${{g_{{x_{z,j}}}}},z\!\in\!\left\{0,1\right\}$ is exponential distributed with mean 1.

%and $S_z = g_{x_{z,0}}$.

%吞吐量定义
%Since throughput is a key performance indicator for future multi-antenna HetNets, to characterize the capacity of cache-enable networks, we analyze the throughput

%{\mathbb{P_m}}

%\vspace{-2pt}
\section{Analysis of Per-user Throughput}
In this section, we will derive the expression of the per-user throughput under a given cache distribution. The per-user throughput of the typical user is given by
\begin{align}\label{defination_throughput}
R({\bf {T}}) = \sum\nolimits_{n \in {\cal F}} {{a_n}\left( {{W_m}{{{{P}}_m^n}}R_0^n + {W_s}{{{P}_s^n}}R_1^n} \right)} ,
\end{align}
where  $R_0^n$ and $R_1^n$ denote the average spectral efficiency when typical user requesting file $n$ is served by MBS and SBS, respectively, ${{{P}_m^n}}$ refers to the probability that typical user requesting file $n$ is served by MBS, and ${{P}_s^n}$ denotes the ``cache hit probability", i.e., there is an SBS caching the requested file $n$ and within cooperation region (i.e., within $R_c$).

Before calculating throughput, we will first characterize the distribution of  the crucial parameter $K_{x}$. To make the analysis tractable, we make the following approximations:
\begin{appro}\label{appro1}
The scheduled micro users form a  homogeneous PPP  ${\Phi _u^1}$ with density $\lambda_1$.
\end{appro}
\begin{appro}\label{appro2}
The numbers of IN requests received by different SBSs are independent.
\end{appro}

Similar approximations have been adopted in traditional connection-based networks as \cite{li2015user,cui2016user,chen2018stochastic},  and their accuracy in content-centric networks will be verified in Section \ref{numberical results}. With Approximation \ref{appro1}, since SBS $x$ will receive the IN requests from scheduled users at distance $r$ away from SBS $x$ if $r<R_c$,  $K_{x}$ is a Poisson distributed variable with parameter $\overline{K}=\pi  R_c^2\lambda_1$. Then, we can obtain the probability mass function of $K_{x}$, i.e., ${p_K}(k) = \frac{{{{\left( {\pi {R_c}^2{\lambda _1}} \right)}^k}}}{{k!}}{e^{ - \pi {R_c}^2{\lambda _1}}}$.

Based on $p_K(k)$, the probability that a SBS received the request from $u_0$ but can not eliminate the interference for $u_0$, i.e., the interference residual ratio, can be calculated as follows:
\begin{align}\label{newEq6}
\varepsilon  = \sum\nolimits_{k = {N_1} - 1}^\infty  {\frac{{k + 1 - ({N_1} - 1)}}{{k + 1}}{p_K}(k)}.
\end{align}

Based on  the thinning theory of PPP and Approximation \ref{appro2},  when $u_0$ requests file $n$, the densities of three types of interfering SBSs $\Phi _1^{1}$, $\Phi _1^{2}$ and $\Phi _1^{3}$, i.e., $\lambda _1^{1}(x)$, $\lambda _1^{2}(x)$ and $\lambda _1^{3}(x)$ can be obtained as follows:
\begin{align}\label{density}
&\lambda _1^{1}(x) \!=\! {\lambda _1}(1 - {T_n})\varepsilon {\mathbbm{1}}\left( {\left\| x \right\| \in \left[ {0,{x_{1,0}}} \right)} \right),\nonumber\\
&\lambda _1^{2}(x) \!=\! {\lambda _1}\varepsilon {\mathbbm{1}}\left( {\left\| x \right\| \in \left[ {{x_{1,0}},{R_c}} \right)} \right),\nonumber\\
&\lambda _1^{3}(x) \!=\! {\lambda _1}{\mathbbm{1}}\left( {\left\| x \right\| \!\in\! \left[ {{R_c},\infty } \right]} \right),
\end{align}
where $\mathbbm{1}(*)$ denotes the indicator function. Aided by the above results, we derive the per-user throughput based on the tools from stochastic geometry.

\begin{thm}\label{theorem per-user throughput}
In the cache-enabled multi-antenna HetNets with user-centric inter-cell interference nulling strategy, the per-user throughput is given by
\begin{align}\label{per-user throughput}
R({\bf T}) = \sum\nolimits_{n \in {\cal F}} {{a_n}\left( {{f_m}\left( {{T_n}} \right) + {f_s}\left( {{T_n}} \right)} \right)},
\end{align}
with
\begin{small}
\begin{align}\label{newEq81.5}
%\begin{array}{l}
{f_m}\left( {{T_n}} \right)&={W_m}{e^{ - \pi {T_n}{\lambda _1}{R_c}^2}}\!\!\!\int_0^\infty \!\!\!\! {\frac{{\left( {1 - {{\left( {1 \!+\! t} \right)}^{ - {N_0}}}} \right)}}{{t\!\left(\! {1 \!+\! \frac{2}{{{\alpha _0}}}{t^{\frac{2}{{{\alpha _0}}}}}B\!\left(\!1\! -\! \frac{2}{{{\alpha _0}}},\frac{2}{{{\alpha _0}}},\frac{t}{{1 + t}}\right)} \!\right)\!}}dt},\\
{f_s}\left( {{T_n}} \right) &= {W_s}\pi {\lambda _1}{T_n}\int\nolimits_0^\infty  {\int\nolimits_0^{R_c^2} {} } \frac{{{S^t}({N_1},k)}}{t}\nonumber\\
&\times {\rm{exp}}\!\left(\! {\;{Z^{x,t}}\left( {\frac{1}{{1 + t}},\frac{t}{{1 + t}},\frac{{{x^{\frac{{{\alpha _1}}}{2}}}t}}{{{R_c}^{{\alpha _1}} + {x^{\frac{{{\alpha _1}}}{2}}}t}}} \right)} \!\right)\!dxdt,
%\end{array}
\end{align}
\end{small}where
\begin{small}
\begin{align}\label{newEq10}
%{R_0^n} = \int_0^\infty  {\frac{{\left( {1 - \frac{1}{{{{\left( {1 + t} \right)}^{{N_0}}}}}} \right)}}{{t\left[ {1 + \frac{2}{{{\alpha _0}}}{t^{\frac{2}{\alpha_0 }}}B(1 - \frac{2}{{{\alpha _0}}},\frac{2}{{{\alpha _0}}},\frac{t}{{1 + t}})} \right]}}dt} ,\\
&{S^{\rm{t}}}({N_1},k) = \left( {1 - \sum\nolimits_{k = 0}^\infty  {{{\left( {1 + t} \right)}^{ - \max({N_1} - k,1)}}{p_K}(k)} } \right),\\
&Z^{x,t}(u,y,z) =- \pi {\lambda _1}x{T_n} \nonumber\\
&\qquad+x\left( {\varepsilon (1 \!-\! {T_n}){C_2^t}(u) +\! \varepsilon {C_1^t}(y) + (1 \!-\! \varepsilon ){C_1^t}(z)} \right),
\end{align}
\end{small}${C_z^t}\left( u \right) =  - \pi {\lambda _z}{t^{\frac{2}{{{\alpha _z}}}}}\frac{2}{{{\alpha _z}}}B\left( {1 - \frac{2}{{{\alpha _z}}},\frac{2}{{{\alpha _z}}},u} \right)$, $z \in \left\{0,1\right\}$, ${C_2^t}\left( u \right) =  - \pi {\lambda _1}{t^{\frac{2}{{{\alpha _1}}}}}\frac{2}{{{\alpha _1}}}B\left( {\frac{2}{{{\alpha _1}}},1 - \frac{2}{{{\alpha _1}}},u} \right)$, and $B(x,y,z) = \int_0^z {{u^{x - 1}}{{\left( {1 - u} \right)}^{y - 1}}du}$ is the incomplete Beta function.
\end{thm}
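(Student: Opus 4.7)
The plan is to decompose $R(\mathbf{T})$ according to (5) and show separately that $P_m^n R_0^n = f_m(T_n)/W_m$ and $P_s^n R_1^n = f_s(T_n)/W_s$; summing over $n$ then yields the theorem. Two tools drive everything: the void-probability/distance-distribution properties of independently thinned PPPs, and the Hamdi-type rate identity $\mathbb{E}[\ln(1+S/I)] = \int_0^\infty z^{-1}(1-\mathcal{L}_S(z))\mathcal{L}_I(z)\,dz$, which converts an expected logarithm into a single integral of Laplace transforms.

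I would begin by computing the service probabilities and serving-distance laws. Because each SBS independently caches file $n$ with probability $T_n$, the file-$n$ SBSs form an independent PPP of density $T_n\lambda_1$. Void probability in the disk of radius $R_c$ immediately gives $P_m^n = e^{-\pi T_n\lambda_1 R_c^2}$ and $P_s^n = 1-P_m^n$. Conditional on MBS service, the nearest MBS distance has density $2\pi\lambda_0 r e^{-\pi\lambda_0 r^2}$, while conditional on SBS service, the serving-SBS distance has density $2\pi T_n\lambda_1 r e^{-\pi T_n\lambda_1 r^2}/P_s^n$ on $(0,R_c)$.

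For the MBS branch I would condition on the serving distance $r$, use $\mathcal{L}_S(z|r) = (1+zr^{-\alpha_0})^{-N_0}$ (Gamma-$N_0$ signal under MRT), and insert the standard PPP Laplace transform for the interference $\Phi_0\setminus\{x_{0,0}\}$ restricted to radii $>r$. Substituting $t = zr^{-\alpha_0}$ converts the PPP Laplace transform into $\exp(-\pi\lambda_0 r^2 A(t))$, where $A(t) = (2/\alpha_0)t^{2/\alpha_0}B(1-2/\alpha_0,2/\alpha_0,t/(1+t))$, via the identity $\int_1^\infty \frac{tu^{-\alpha_0/2}}{1+tu^{-\alpha_0/2}}\,du = (2/\alpha_0)t^{2/\alpha_0}B(1-2/\alpha_0,2/\alpha_0,t/(1+t))$. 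Because $r^2$ now appears only in exponents, integrating against $2\pi\lambda_0 r e^{-\pi\lambda_0 r^2}$ collapses to $1/(1+A(t))$; multiplying by $P_m^n$ then produces $f_m(T_n)$ exactly.

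The SBS branch is the main obstacle, where three difficulties come together. First, the signal's Gamma shape $\max(N_1-K_{x_{1,0}},1)$ is random, so I would average over $K_{x_{1,0}}\sim\mathrm{Poisson}(\pi R_c^2\lambda_1)$ using Approximation~\ref{appro2}, which after $t=zr^{-\alpha_1}$ gives $1-\mathcal{L}_S(z|r)$ equal to $S^t(N_1,k)$. Second, the interference splits into three independent thinned PPPs with the densities in (7); third, the three radial ranges $[0,r)$, $[r,R_c)$, $[R_c,\infty)$ produce different incomplete-Beta arguments. The near region gives $\exp(x\cdot\varepsilon(1-T_n)C_2^t(1/(1+t)))$ (with $x=r^2$) via the counterpart identity $\int_0^1 \frac{tu^{-\alpha_1/2}}{1+tu^{-\alpha_1/2}}\,du = (2/\alpha_1)t^{2/\alpha_1}B(2/\alpha_1,1-2/\alpha_1,1/(1+t))$. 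The crucial trick for the outer region is to re-split $\lambda_1\varepsilon\mathbbm{1}_{[r,R_c)}+\lambda_1\mathbbm{1}_{[R_c,\infty)}$ as $\lambda_1\varepsilon\mathbbm{1}_{[r,\infty)}+\lambda_1(1-\varepsilon)\mathbbm{1}_{[R_c,\infty)}$: the first piece factors through the serving distance just like the MBS case and produces $\varepsilon C_1^t(t/(1+t))$, while the second piece produces $(1-\varepsilon)C_1^t(z)$ with $z=x^{\alpha_1/2}t/(R_c^{\alpha_1}+x^{\alpha_1/2}t)$. Combined with the serving-distance factor $e^{-\pi\lambda_1 T_n x}$, the three contributions collect into exactly the exponent $Z^{x,t}$; applying Hamdi's identity and the change of variables $x=r^2$ then delivers $f_s(T_n)$. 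This outer-region reorganization, together with tracking the three different Beta arguments consistently, is the hardest technical step; once it is in place, the final assembly is routine algebra.
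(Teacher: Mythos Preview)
Your proposal is correct and follows essentially the same route as the paper: Hamdi's integral identity for $\mathbb{E}[\ln(1+S/I)]$, the void-probability/serving-distance laws, Gamma Laplace transforms for the signal (averaged over $K_{x_{1,0}}$ in the SBS case), and the PGFL for the three interference shells with densities~(7), followed by the substitution $t=zr^{-\alpha_1}$ and $x=r^2$. The only organizational difference is that the paper computes $\mathcal{L}_{I_1^2}$ and $\mathcal{L}_{I_1^3}$ separately---obtaining $\varepsilon\bigl(C_1^t(t/(1+t))-C_1^t(z)\bigr)$ and $C_1^t(z)$ respectively---and then lets the cross terms cancel when multiplying, whereas you pre-combine the densities via the identity $\lambda_1\varepsilon\,\mathbbm{1}_{[r,R_c)}+\lambda_1\,\mathbbm{1}_{[R_c,\infty)}=\lambda_1\varepsilon\,\mathbbm{1}_{[r,\infty)}+\lambda_1(1-\varepsilon)\,\mathbbm{1}_{[R_c,\infty)}$ to read off the $\varepsilon C_1^t(y)+(1-\varepsilon)C_1^t(z)$ structure directly; the two are algebraically identical.
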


\begin{proof}
When the typical user $u_0$ requests file $n$, based on the capacity calculation lemma in \cite{hamdi2010useful}, spectral efficiency is given by
\begin{align}\label{SE}%mathbb%
&{R_z^n} = \mathbb {E}\left[ {\ln \left( {1 + \rm {SIR_z}} \right)} \right] \nonumber\\
&=\!\!\int_{\rm{0}}^\infty \!\!\!\! {\int_0^\infty  {\frac{{{{\cal L}_{{I_z}}}\left( {t\left| {\;\left| {{x_{z,0}}} \right|} \right.} \right)\left( {1 \!-\! {{\cal L}_{{S_z}}}\left( {t\left| {\;\left| {{x_{z,0}}} \right|} \right.} \right)} \right)}}{t}{f_{{x_{z,0}}}}\left( x \right)dtdx} },
\end{align}
where $z \in \left\{ {0,1} \right\}$, and ${\cal L}_{X}(t)$ is the Laplace transform of $X$. Based on (\ref{SE}), we calculate $R_0^n$ and $R_1^n$, respectively.

To begin with, based on the void probability of PPP, the probability density function (PDF) of the service distances in two layers can be calculated as follows:
\begin{align}\label{fsfm}
&{f_{{x_{0,0}}}}(x) = 2\pi {\lambda _0}x\;{\exp\left({ - \pi {\lambda _0}{x^2}}\right)},\nonumber\\
&{f_{{x_{1,0}}}}(x\left| {{x_{1,0}} < {R_c}} \right.) = \frac{{2\pi {\lambda _1}{T_n}x\exp \left( { - \pi {T_n}{\lambda _1}{x^2}} \right)}}{{1 - \exp \left( { - \pi {T_n}{\lambda _1}{R_c}^2} \right)}},
\end{align}
and the cache hit probability ${{P}_s^n}$ can be derived as follows:
\begin{align}\label{PsPm}
{{{P}_s^n}} \!=\!1 \!- \!{{{P}_m^n}}\!=\! \int_0^{{R_c}}\! {2\pi {T_n}} {\lambda _1}r{e^{ - \pi {T_n}{\lambda _1}{r^2}}}dr \!= \!1\! - \!{e^{ - \pi {T_n}{\lambda _1}{R_c}^2}}.
\end{align}

To obtain $R_0^n$, we first calculate  the Laplace transforms of $S_0$ and $I_0$, respectively. We rewrite the expressions of $S_0$ and $I_0$ in (\ref{newEq1}) as ${S_0} = {g_{{x_{0,0}}}}$ and ${I_0}{\rm{ = }}\sum\nolimits_{{x_{0,j}} \in {\Phi _0}\backslash {x_{0,0}}} {{g_{{x_{0,j}}}}{{\left| {{x_{0,j}}} \right|}^{ - {\alpha _0}}}{{\left| {{x_{0,0}}} \right|}^{{\alpha _0}}}} $, respectively.

As $g_{x_{0,0}}$ is Gamma distributed, we obtain ${{\cal L}_{{S_0}}}(t|\left| {{x_{0,0}}} \right|) $ as follows:
\begin{align}\label{R0_1}
\begin{array}{l}
{{\cal L}_{{S_0}}}(t|\left| {{x_{0,0}}} \right|)  = {{\mathbb {E}}_{{g_{{x_{0,0}}}}}}\left[ {{e^{ - {g_{{x_{0,0}}}}t}}} \right]= {\left( {1 + t} \right)^{ - {N_0}}}.
\end{array}
\end{align}

By applying the probability generating function (PGFL) of PPP, we derive the Laplace transform of $I_0$ as follows:
\begin{align}\label{R0_2}
\begin{array}{l}
{{\cal L}_{{I_0}}}(t\left| {\left| {{x_{0,0}}} \right|} \right.) = \exp \left( {{C_0^t}\left( {\frac{t}{{1 + t}}} \right){{\left| {{x_{0,0}}} \right|}^2}} \right).
\end{array}
\end{align}

Then, by substituting (\ref{fsfm}), (\ref{R0_1}) and (\ref{R0_2}) into (\ref{SE}), we can obtain $R_0^n$ after some algebraic manipulations.

To obtain $R_1^n$, we first calculate the Laplace transforms of $S_1$ and $I_1$, respectively. We rewrite the expressions of $S_1$ and $I_1$ in (\ref{111}) as ${S_1} = {g_{{x_{1,0}}}}$ and ${I_1}{\rm{ = }}\sum\nolimits_{{x_{1,j}} \in \Phi _1^1 \cup \Phi _1^2 \cup \Phi _1^3} {{g_{{x_{1,j}}}}{{\left| {{x_{1,j}}} \right|}^{ - {\alpha _1}}}{{\left| {{x_{1,0}}} \right|}^{{\alpha _1}}}} $, respectively. Based on the law of total expectation, the Laplace transform of $S_1$ is given by
\begin{align}\label{S1}
{{\cal L}_{{S_1}}}\left( {t\left| {\left| {{x_{1,0}}} \right|} \right.} \right) = \sum\nolimits_{k=0}^\infty  {{{\left( {1 + t} \right)}^{ - \max({N_1} - k,1)}}} {p_K}(k).
\end{align}

In order to calculate the Laplace transform of $I_1$, we define ${I_1} = I_1^1 + I_1^2 + I_1^3$, where $I_1^z{\rm{ = }}\sum\nolimits_{{x_{1,j}} \in \Phi _1^z} {{{\left| {{x_{1,0}}} \right|}^{{\alpha _1}}}{{\left| {{x_{1,j}}} \right|}^{ - {\alpha _1}}}{g_{{x_{1,j}}}}} $, and $z \in \left\{1,2,3\right\}$. First, using the PGFL of PPP with  densities (\ref{density}) in different regions, we calculate the Laplace transforms of three types of interferences as follows:
\begin{align}
\begin{array}{l}
{{\cal L}_{I_{\rm{1}}^{\rm{1}}}}\left( {t\left| {\left| {{x_{1,0}}} \right|} \right.} \right)\\
 = \exp \left( { - 2\pi \varepsilon (1 - {T_n}){\lambda _1}\int\limits_0^{\left| {{x_{1,0}}} \right|} {\left( {1 - \frac{1}{{1 + {{\left| {{x_{1,0}}} \right|}^{{\alpha _{\rm{1}}}}}{r^{ - {\alpha _{\rm{1}}}}}t}}} \right)} rdr} \right)\\
 = \exp \left( {\varepsilon (1 - {T_n})C_{\rm{2}}^{\rm{t}}\left( {\frac{1}{{1 + t}}} \right){{\left| {{x_{1,0}}} \right|}^2}} \right),\\
{{\cal L}_{I_{\rm{1}}^{\rm{2}}}}\left( {t\left| {\left| {{x_{1,0}}} \right|} \right.} \right)\! =\! \exp \!\left(\!\! { - 2\pi \varepsilon {\lambda _1}\!\int\limits_{\left| {{x_{1,0}}} \right|}^{{R_c}} \!{\left( {1 \!-\! \frac{1}{{1 + {{\left| {{x_{1,0}}} \right|}^{{\alpha _{\rm{1}}}}}{r^{ - {\alpha _{\rm{1}}}}}t}}} \right)} rdr}\!\! \right)\\
 = \exp \left( {\varepsilon \left( {C_1^t\left( {\frac{t}{{1 + t}}} \right) - C_1^t\left( {\frac{{{{\left| {{x_{1,0}}} \right|}^{{\alpha _1}}}{R_c}^{ - {\alpha _1}}t}}{{1 + {{\left| {{x_{1,0}}} \right|}^{{\alpha _1}}}{R_c}^{ - {\alpha _1}}t}}} \right)} \right)\;{{\left| {{x_{1,0}}} \right|}^2}} \right),\\
{{\cal L}_{I_{\rm{1}}^{\rm{3}}}}\left( {t\left| {\left| {{x_{1,0}}} \right|} \right.} \right) = \!\exp \left( \!{ - 2\pi {\lambda _1}\int\limits_{{R_c}}^\infty  {\left( {1 - \frac{1}{{1 + {{\left| {{x_{1,0}}} \right|}^{{\alpha _{\rm{1}}}}}{r^{ - {\alpha _{\rm{1}}}}}t}}} \right)} rdr} \!\!\right)\\
 = \exp \left( {C_1^t\left( {\frac{{{{\left| {{x_{1,0}}} \right|}^{{\alpha _1}}}{R_c}^{ - {\alpha _1}}t}}{{1 + {{\left| {{x_{1,0}}} \right|}^{{\alpha _1}}}{R_c}^{ - {\alpha _1}}t}}} \right){{\left| {{x_{1,0}}} \right|}^2}} \right).
\end{array}
\end{align}

Then, we can calculate the Laplace transform of $I_1$ as ${{\cal L}_{I_1^{}}}\left( {t\left| {\left| {{x_{1,0}}} \right|} \right.} \right) = {{\cal L}_{I_1^1}}\left( {t\left| {\left| {{x_{1,0}}} \right|} \right.} \right){{\cal L}_{I_1^2}}\left( {t\left| {\left| {{x_{1,0}}} \right|} \right.} \right){{\cal L}_{I_1^3}}\left( {t\left| {\left| {{x_{1,0}}} \right|} \right.} \right)$. We omit the details here due to the space limitation. By substituting ${{\cal L}_{I_1^{}}}\left( {t\left| {\left| {{x_{1,0}}} \right|} \right.} \right)$, (\ref{fsfm}), and (\ref{S1}) into (\ref{SE}), we can obtain $R_1^n$ after some algebraic manipulations.

Finally, by substituting $R_0^n$, $R_1^n$ and (\ref{PsPm}) into (\ref{defination_throughput}), we can obtain the per-user throughput.
\end{proof}

%Above all, substituting those Laplace transform and (\ref{newEq4}) into (\ref{newEq3}), we can obtain $R_0^n$ and $R_1^n$. We omit the details here due to the space limitation. Then, combining $R_0^n$ with (\ref{newEq4.5}), $f_m(T_n)$ can be obtained. Similarly, $f_s(T_n)$ can be derived by combining $R_1^n$ with (\ref{newEq4.5}). Based on $f_m(T_n)$ and $f_s(T_n)$, the throughput can be obtained.

Note that if each SBS adopts MRT precoding instead of ZFBF, the per-user throughput can be obtain easily by Theorem \ref{theorem per-user throughput}. Moreover, our analytical result (\ref{per-user throughput}) does not contain the high order derivatives of the Laplace transform as in \cite{xu2017analysis} and the matrix inversion as in \cite{kuang2017random,li2015user}, thus it can be easily calculated by mathematical software. Under the parameters in Section \ref{numberical results}, the computation time using Monte-Carlo simulations is more than 500 times of that using our analytical results, which demonstrates that our results are more efficient and tractable than Monte-Carlo simulations.

%\vspace{-8pt}
\section{Throughput Maximization}
In this section, we design the caching placement $\bf T$ in our scheme by solving the following optimization problem:
\begin{subequations}\label{optimal1}
\begin{align}
%\begin{array}{l}
&{\bf{P0}}: \;\;\mathop {\max }\limits_{\mathbf T} \;\;R\left( {\mathbf T} \right){\rm{ = }}\sum\nolimits_{n \in {\cal F}} {{a_n}\left( {{f_m}\left( {{T_n}} \right) + {f_s}\left( {{T_n}} \right)} \right)} \\\label{222}
&\;\;\;\;\;\;\;\;\;\;\;{\rm{s.t.}}\;\;\;\;\sum\nolimits_{n \in {\cal F}} {{T_n}}  \le C,\\\label{cons1}
&\;\;\;\;\;\;\;\;\;\;\;\;\;\;\;\;\;\;\;\;\;0 \le {T_n} \le 1,n \in {\cal F}.
%\end{array}
\end{align}
\end{subequations}

Since storing more files will increase the per-user throughput, without loss of optimality, we rewrite the constraint (\ref{222}) as follows:
\begin{align}\label{cons2}
\sum\nolimits_{n \in {\cal F}} {{T_n}}  = C.
\end{align}

Notice that it is difficult to ensure the convexity of $ R\left( {\mathbf T} \right)$ in general, due to the summation of two tiers and the complex structure of $f_s(T_n)$. However, problem ${\bf{P0}}$ is a continuous optimization of a differentiable function over a convex set, since function $ R\left( {\mathbf T} \right)$ is differentiable and the constraints in (\ref{cons1}) (\ref{cons2}) are linear. Therefore, we can use the gradient projection method in \cite{cui2016analysis} to compute a local-optimal solution (local-Opt.).

To obtain some design insights, we first consider a special case with $W_s\gg W_m$, since more users will be associated with an MBS than an SBS in the traditional HetNets, which makes the available bandwidth of an SBS for $u_0$ is larger than that of an MBS in general\cite{wen2018random}. In this case, since the impact of the throughput contributed by MBS tier can be safely ignored, we consider the throughput maximization problem as follows:
\begin{align}
%\begin{array}{l}
&{\bf{P1}}:\;\;\mathop {\max }\limits_{\mathbf T} \;\;\;\underline{R}\left( {\mathbf T} \right){\rm{ = }}\sum\nolimits_{n \in {\cal F}} {{a_n}{f_s}\left( {{T_n}} \right)} \nonumber\\
&\;\;\;\;\;\;\;\;\;\;\;{\rm {s.t.}}\;\;\;\;\sum\nolimits_{n \in {\cal F}} {{T_n}}  = C,\;\;\;0 \le {T_n} \le 1,n \in {\cal F}.
%\end{array}
\end{align}

However, it is still difficult to ensure the convexity of $\underline{R}\left( {\mathbf T} \right)$  in general case due to the complex form of $f_s(T_n)$. To further simplify $\bf{P1}$, we  consider a special case that ${R_c} \le \sqrt {\frac{{\rm{2}}}{{\pi {\lambda _1}}}} $ and $N_1>4$. Note that this means that the average number of IN requests received by a SBS is less than 2, which is close to the best value as shown in \cite{li2015user}. Meanwhile, in a realistic scenario, the size of the coordinated set would not be large due to the signaling overhead. In this case, due to $\varepsilon  \to {\rm{0}}$, the second order derivative of the objective function ${{\underline {R}}({\bf{T}})}$ with respect to $T_n$ can be expressed as:
\begin{align}
\begin{array}{l}
\frac{{{\partial ^2}{{{\underline {R}}({\bf{T}})}}}}{{T_n^2}}{\rm{ = }}{a_n}{W_s}{\pi ^2}\lambda _1^2\int\nolimits_0^\infty \!{\int\nolimits_0^{R_c^2} {} } \frac{{{S^t}({N_1},k)}}{t}\left( {{T_n}\pi {\lambda _1}x - 2} \right)\\
\times {\rm{exp}}\left( {{Z^{x,t}}\left( {\frac{1}{{1 + t}},\frac{t}{{1 + t}},\frac{{{x^{\frac{{{\alpha _1}}}{2}}}t}}{{{R_c}^{{\alpha _1}} + {x^{\frac{{{\alpha _1}}}{2}}}t}}} \right)} \right)xdxdt.
\end{array}
\end{align}

It is easy to verify that $\frac{{{\partial ^2}{{\underline {R}}({\bf{T}})}}}{{T_n^2}} \le 0$ when ${R_c} \le \sqrt {\frac{{\rm{2}}}{{\pi {\lambda _1}}}} $, thus ${\bf{P1}}$ is convex. Therefore, we can obtain the near-optimal solutions (near-Opt.) to ${\bf{P0}}$  by solving ${\bf{P1}}$, since the optimal objective value obtained from $\bf{P1}$ in general serves as a tight lower bound of that of problem $\bf{P0}$. Using  Karush-Kuhn-Tucker (KKT) condition, the near-optimal solutions to ${\bf{P0}}$ are given by
\begin{align}\label{kkt}
{T_n}^*\left( {{\nu ^*}} \right) = \left[ {\xi \left( {{\nu ^*}} \right)} \right]_0^1,\;\;\;n \in {\cal F},
\end{align}
where $\left[ x \right]_0^1 \buildrel \Delta \over = \max \{ \min\{ x,1\} ,0\} $, and ${\xi \left( {{\nu ^*}} \right)}$ is the solution over $T_n$ of the equation ${a_n}{f_s}^\prime \left( {{T_n^*}} \right) = {\nu ^*}$. The optimal Lagrange multiplier ${{\nu ^*}}$ can be obtained by bisection search based on the condition $\sum\nolimits_{n \in {\cal F}} {{T_n}^*\left( {{\nu ^*}} \right)}  = C$.

Based on (\ref{kkt}), since ${f_s}^\prime(T_n)$ is a decreasing function and we have ${a_1} \ge {a_2} \ge  \cdots  \ge {a_F} \ge 0$, we can easily obtain that $1 \ge {T_1}^* \ge {T_2}^* \ge  \cdots  \ge {T_F}^* \ge 0$. This indicates that SBSs tend to cache popular files, which is consistent with our intuition.

\vspace{-5pt}
\section{Numerical Results}\label{numberical results}
%In the simulations, unless otherwise noted, our simulation is based on the following setting: $W_m=0.2$ MHz, $W_s=20$ MHz\cite{wen2018random}, $N_0=10$, $N_1=6$, ${\lambda _1} = \frac{1}{{{\rm{5}}{{\rm{0}}^{\rm{2}}}\pi }}$ ${m^{ - 2}}$, ${\lambda _0} = \frac{1}{{{\rm{50}}{{\rm{0}}^{\rm{2}}}\pi }}$ ${m^{ - 2}}$, $\lambda_u=\frac{1}{{{{\rm{5}}^{\rm{2}}}\pi }}$, ${\alpha _0} = {\alpha _1} = 4$, $F=100$, $C=20$, $R_c=70$ $m$, the file popularity is modeled as the Zipf distribution ${a_n} = \frac{{{n^{ - \gamma }}}}{{\sum\nolimits_{i \in {\cal F}} {{i^{ - \gamma }}} }}$ with exponent $\gamma=0.5$.

%As discussed in \cite{chae2017content}, the gaps between RC and other cache strategies is called FDG, and the gaps between IN and MRT is called ``cooperation gain''.

In this section, numerical results are presented to compare our proposed scheme, i.e., RC combined with IN (RC\&IN), with some existing baseline schemes. Unless otherwise noted, our simulation is based on the following setting: $W_m=0.2$ MHz, $W_s=5$ MHz, $N_0=10$, $N_1=7$, ${\lambda _u} = \frac{1}{{{{\rm{5}}^{\rm{2}}}\pi }}{\rm {m}^{ - 2}} = {10^2}{\lambda _1} = {10^4}{\lambda _0}$, ${\alpha _0} = {\alpha _1} = 4$, $F=100$, $C=20$, $R_c=100 \rm{m}$, and the Zipf exponent $\gamma=0.5$. We obtain the Monte-Carlo results by averaging over $10^4$ random realization.

\begin{figure}[!t]
\setlength{\abovecaptionskip}{-5pt}
\setlength{\belowcaptionskip}{-10pt}
\centering
\includegraphics[width= 0.4\textwidth]{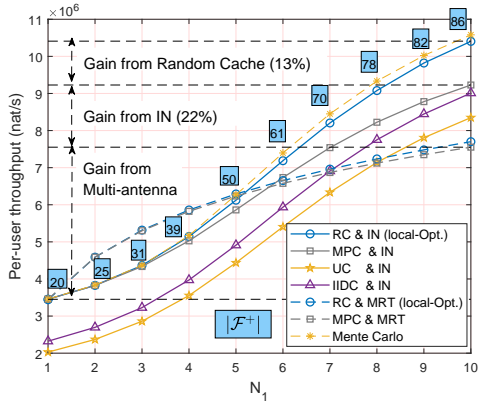}
\DeclareGraphicsExtensions.
\caption{Per-user throughput versus number of antennas at $C=20$, $\gamma=0.5$, and $R_c=100$.}
\label{figure1}

\vspace{-10pt}
\end{figure}

Fig. \ref{figure1} shows the impact of physical-layer related parameter $N_1$ on caching performance, it also verifies our throughput expression with Monte-Carlo simulations. Fig. \ref{figure1} demonstrates that without IN, the file diversity gain (FDG) (the gap between RC\&MRT and MPC\&MRT) is negligible due to the overwhelming interference. In contrast, FDG (the gap between RC\&IN and MPC\&IN) is obvious with IN.  Fig. \ref{figure1} also shows that both the FDG (the gap between RC\&IN and MPC\&IN) and cooperation gain (the gap between RC\&IN and RC\&MRT) benefit from IN, especially at large $N_1$. This is because increasing $N_1$ enhances the ability of IN, reduces interference residual ratio $\varepsilon$, and protects the performance when users download less popular files from farther SBSs. Particularly, we can observe that when $N_1$ is relatively small or $R_c$ is relatively large, it is better to switch to the non-coordination case, i.e., adopting MRT precoding to enhance its desired signal, because of the small effective channel gain and the large $\varepsilon$ in these regions. Moreover, it is observed that more different contents can be cached in the SBS tier when SBSs are equipped with more antennas.

\begin{figure}[!t]
\setlength{\abovecaptionskip}{-5pt}
\setlength{\belowcaptionskip}{-10pt}
\centering
\includegraphics[width= 0.4\textwidth]{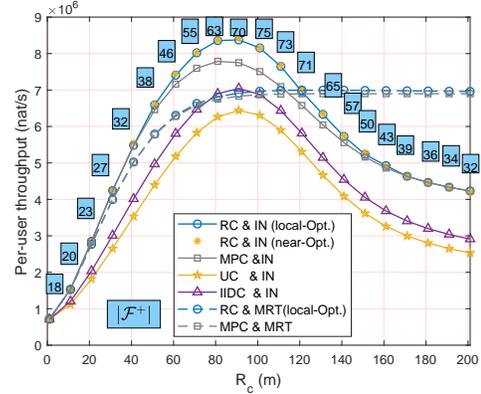}
\DeclareGraphicsExtensions.
\caption{Per-user throughput versus  cooperation radius at $C=20$, $\gamma=0.5$ and $N_1=7$.}
\label{figure2}

\vspace{-11.5pt}
\end{figure}
Fig. \ref{figure2} shows the impact of physical-layer related parameter $R_c$ on caching performance. It also demonstrates that our near-optimal solutions are very close to our local-optimal solutions, even at large $R_c$ region. It also shows that the optimal $R_c$ is around $ \sqrt {2/\left( {\pi {\lambda _1}} \right)}  $. In addition, it is observed that our proposed design first increases and then decreases when $R_c$ increases. The reason lies in that  increasing $R_c$ can first grow the cache hit probability while eliminating more interference. However, when $R_c$ becomes  large enough, the SBS  only has a small DoF for its own signal links, and the interference residual ratio is high. Moreover, we can observe that RC will reduce to MPC when $R_c$ is small or sufficiently large. This is because when $R_c$ is small, the MPC scheme can bring the largest cache hit probability. When $R_c$ is sufficiently large, the cache hit probability and $\varepsilon$ is very large, thus users tend to be associated with the nearest SBS to avoid the heavy interference. Furthermore, it is observed that  more different files cached (i.e., larger $|\cal{F}^+|$) leads to  larger FDG (the gap between RC\&IN and MPC\&IN).

\begin{figure}[!t]
\setlength{\abovecaptionskip}{-5pt}
\setlength{\belowcaptionskip}{-10pt}
\centering
\includegraphics[width= 0.35\textwidth]{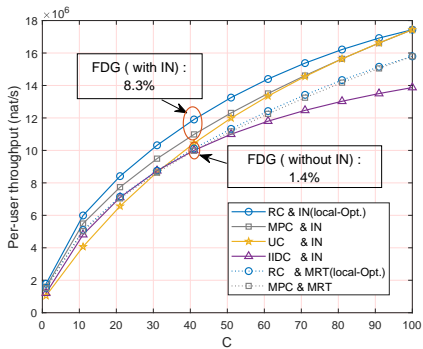}
\DeclareGraphicsExtensions.
\caption{Per-user throughput versus cache size at  $R_c=100$ and $\gamma=0.5$.}
\label{figure3}

\vspace{-10pt}
\end{figure}

\begin{figure}[!t]
\setlength{\abovecaptionskip}{-5pt}
\setlength{\belowcaptionskip}{-10pt}
\centering
\includegraphics[width= 0.35\textwidth]{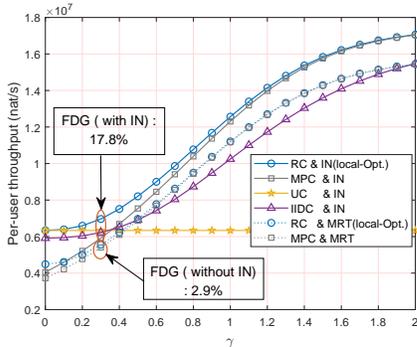}
\DeclareGraphicsExtensions.
\caption{Per-user throughput versus Zipf exponent at $R_c=100$ and $C=20$.}
\label{figure4}

\vspace{-10pt}
\end{figure}
Fig. \ref{figure3} and Fig. \ref{figure4} show the impact of content-layer related parameters on caching performance. It is observed that RC\&IN can fully harvest the FDG compared to RC without IN, since it effectively mitigates the dominant interferences. It is also observed that our proposed design is superior to all the contrastive schemes. Particularly,  Fig. \ref{figure3} shows that cooperation gain (the gap between RC\&IN and RC\&MRT) and throughput of all the schemes increase with cache size. This is because large cache size increases the cache hit probability, more files can be downloaded from the nearest SBSs, thus increasing the throughput. Meanwhile, to gain FDG, large cache size also increases the cache probabilities of those less popular files, thereby increasing the benefits from IN.  Fig. \ref{figure4} shows that optimal cache placement changes from UC to MPC with increasing $\gamma$. This is because large $\gamma$ means that the content requirements become more concentrated, thus SBSs only need to cache the most popular files.

\begin{figure}[!t]
\setlength{\abovecaptionskip}{-5pt}
\setlength{\belowcaptionskip}{-10pt}
\centering
\includegraphics[width= 0.35\textwidth]{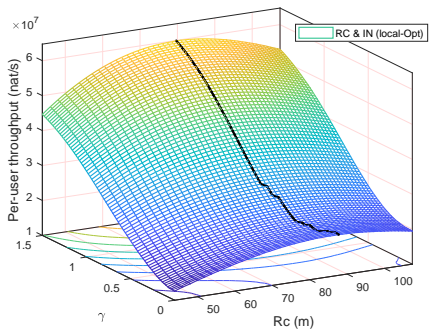}
\DeclareGraphicsExtensions.
\caption{ Per-user throughput under RC\&IN scheme.}
\label{figure5}

\vspace{-10pt}
\end{figure}
Fig. \ref{figure5} shows the relationship between content-layer related parameter $\gamma$ and physical-layer related parameter $R_c$. The black line represents the maximum value of throughput versus  $\gamma$. It is observed that optimal $R_c$ decreases when $\gamma$ increases, which means that UC scheme need larger cooperation region than MPC scheme. The reason lies in that under a more concentrated file request, users may find their desired files in a closer SBS, thus decreasing the optimal cooperation radius.

%\begin{figure}[!t]
%\setlength{\abovecaptionskip}{-5pt}
%\setlength{\belowcaptionskip}{-10pt}
%\centering
%\includegraphics[width= 0.4\textwidth]{figure_1_25_1.eps}
%\DeclareGraphicsExtensions.
%\caption{Throughput versus number of antennas and cooperation radius at $C=20$, $\gamma=0.5$.}
%\label{figure1}
%
%\vspace{-10pt}
%\end{figure}
%
%\begin{figure}[!t]
%\setlength{\abovecaptionskip}{-5pt}
%\setlength{\belowcaptionskip}{-10pt}
%\centering
%\includegraphics[width= 0.4\textwidth]{figure_1_24_2.eps}
%\DeclareGraphicsExtensions.
%\caption{Throughput versus cache size and  Zipf exponent at $N_1=7$ and $R_c=100$.}
%\label{figure2}
%
%\vspace{-10pt}
%\end{figure}

\section{Conclusion}

In this paper, we jointly considered random cache and interference nulling in multi-antenna HetNets. The explicit expression of the throughput was first obtained by using tools from stochastic geometry. Then, we tackled the throughput maximization problem under cache distribution. Numerical results showed that joint random cache and IN can  sufficiently reap the file diversity gain and  achieve a significant gain in the per-user throughput over existing baseline schemes.

%In this letter, we jointly consider random cache and interference nulling in multi-antenna HetNets. The explicit expression of the throughput is first obtained by using tools from stochastic geometry. Then, we consider the throughput maximization problem under cache distribution. The local optimal solution in the general case and near-optimal solution in some special cases can be obtained according to optimization techniques, respectively. Numerical results show that our proposed design achieves a much better performance than traditional schemes and non-coordination case.

% trigger a \newpage just before the given reference

\bibliographystyle{IEEEtran}

% argument is your BibTeX string definitions and bibliography database(s)

\bibliography{myref}
%\bibliography{IEEEabrv,myref}
% that's all folks

\end{document}